\newcommand{\Om}{\Omega}
\newcommand{\om}{\omega}
\newcommand{\Sch}{Schr\"odinger}
\newcommand{\beq}{\begin{equation}}
\newcommand{\eeq}{\end{equation}}
\newcommand{\be}{\begin{equation}}
\newcommand{\ee}{\end{equation}}
\newcommand{\bea}{\begin{eqnarray}}
\newcommand{\eea}{\end{eqnarray}}
\newcommand{\brs}{\begin{eqnarray*}}
\newcommand{\ers}{\end{eqnarray*}}
\newcommand{\ba}{\begin{array}}
\newcommand{\ea}{\end{array}}
\newcommand{\br}{\begin{eqnarray}}
\newcommand{\er}{\end{eqnarray}}
\newcommand{\bra}[1]{\left\langle #1 \right|}
\newcommand{\ket}[1]{\left| #1 \right\rangle}
\newcommand{\ie}{i.e.,\ }
\newcommand{\eg}{e.g.,\ }
\newcommand{\virg}[1]{``#1\,''}
\newcommand{\ve}[1]{{#1}}
\newcommand{\sub}[1]{_{\mathrm{#1}}}
\def\({\left(}
\def\){\right)}
\newcommand{\set}[1]{ \left\{  #1 \right\}}
\newcommand{\Hi}{\mathcal{H}}
\newcommand{\iu}{\mathrm{i}}
\newcommand{\eu}{\mathrm{e}}
\newcommand{\ph}{\varphi}
\let\oldfootnote\footnote
\renewcommand{\footnote}[1]{\oldfootnote{\  #1}}
\newcommand{\gp}{}
\theoremstyle{plain}
\newtheorem{assumption}{Assumption}
\newtheorem{theorem}{Theorem}
\newtheorem{prop}[theorem]{Proposition}
\theoremstyle{definition}
\newtheorem{definition}[theorem]{Definition}
\theoremstyle{remark}
\renewcommand{\H}{{\cal H}}
\newcommand{\N}{{\mathbf N}}
\newcommand{\C}{{\mathbf{C}}}
\newcommand{\R}{{\mathbf{R}}}
\newcommand{\Z}{{\mathbf Z}}
\newcommand{\Rabi}{\mathrm{Rabi}} 
\newcommand{\Id}{\mathds{1}}
\renewcommand{\i}{\mathrm{i}\,}
\begin{document}

\title{Controllability of 
spin-boson systems
}
\date{}

\author{Ugo Boscain, Paolo Mason, Gianluca Panati, and Mario Sigalotti}

\maketitle

\begin{abstract}
In this paper we study the so-called spin-boson system, namely {a two-level system} in interaction with 
a distinguished mode of a quantized
bosonic field.  
We give a brief description of the controlled Rabi and Jaynes--Cummings 
models and we discuss their appearance in the mathematics and physics literature. 
We then study the controllability of the Rabi model when the control is an external field
acting on the bosonic part.
Applying geometric control techniques to the Galerkin approximation and
using perturbation theory to guarantee non-resonance of the spectrum of
the drift operator, we prove approximate controllability of the
system, for almost every value of the interaction parameter. 
\end{abstract}

 \section{Introduction and description of the models}

\subsection{The Rabi model {\it alias} the standard spin-boson model}

 In the Hilbert space $\Hi=L^2(\R,\C)\otimes \C^2$, we consider the \Sch\ equation
 \be\label{SE}
 \i \partial_t \psi=H_\Rabi \psi,
 \ee
where  
\begin{equation} \label{Rabi_original}
H_\Rabi=\frac \omega2 (P^2+X^2)\otimes \Id+ \frac\Om2\Id\otimes \sigma_3+g \, X\otimes \sigma_1,
\end{equation}
for some fixed positive constants $\om$ and $\Om$, and $g \in \R$, where $X$ is the multiplication operator defined by $(X\psi)(x)=x\psi(x)$, and $P=-i\partial_x$. Here and in the following we use $\sigma_1,\sigma_2,\sigma_3$ to denote the 
Pauli matrices 
$$\sigma_1=\left(\begin{array}{cc}0&1\\1&0\end{array}\right),\quad \sigma_2=\left(\begin{array}{cc}0&-\i\\ \i&0\end{array}\right),\quad \sigma_3=\left(\begin{array}{cc}1&0\\0&-1\end{array}\right).$$

Using the natural isomorphism  $\H\cong L^2(\R,\C^2)$, one rewrites   \eqref{SE} as the system
 \begin{align*}
\i \partial_t \psi_+&=\frac\omega2 (-\partial_x^2+x^2)\psi_++\frac\Omega2 \psi_+  +g x\psi_-\\
\i \partial_t \psi_-&=\frac\omega2 (-\partial_x^2+x^2)\psi_--\frac\Omega2 \psi_-  +g x\psi_+.
\end{align*}
In the physics literature, it is customary to introduce the  \emph{raising} and \emph{lowering} matrices
$$\sigma_+=\frac12(\sigma_1+\i\sigma_2)=\left(\begin{array}{cc}0&1\\0&0\end{array}\right),\qquad \sigma_-=\frac12(\sigma_1-\i\sigma_2)=\left(\begin{array}{cc}0&0\\1&0\end{array}\right).$$

Setting 
$$
\gp{a=\frac1{\sqrt{2}}(X +\i P),\qquad a^\dag=\frac1{\sqrt{2}}(X- \i P),}$$

and omitting tensors, $H_\Rabi$ appears in physics literature  as
\begin{equation} \label{Rabi}
 H_\Rabi=\omega \left(a^\dag a+\frac12
\right)+\frac \Omega2\sigma_3+ \frac{g}{\sqrt{2}} \, (a+a^\dag) (\sigma_- + \sigma_+).
\end{equation}

Modeling of spin-boson interactions was initiated by Rabi in the 30's \cite{Rabi1936, Rabi1937}.
To our knowledge, the Hamiltonian \eqref{Rabi} was first derived from a more fundamental model in the milestone paper \cite{Jaynes1963}, where also the simpler Jaynes-Cummings Hamiltonian appears (see Section \ref{Sec:JC}). 
A throughout discussion on various aspects of this model can be found in \cite{allen1975optical}. 
Recent results on its integrability have been obtained in  \cite{Braak}.

The Hamiltonian \eqref{Rabi} is ubiquitous in the literature, since it encodes one of the simplest possible couplings between a harmonic oscillator and a two-level system. The physical interpretation of the two factors appearing in the tensor product varies according to the context. We briefly outline the interpretation in cavity QED in the following digression.

\subsubsection{Digression: the Rabi model in the context of cavity QED}
\label{Sec:CQED}

While we refer to the specific literature for an exhaustive treatement of the problem \cite{Jaynes1963, Cohen1989, Raimond2001}, we briefly sketch the derivation of the Hamiltonian \eqref{Rabi} in the context of \emph{Cavity Quantum Electro Dynamics} (CQED).
In a typical CQED experiment, an atom moves across a Fabry--Perot cavity interacting with the quantized electromagnetic (EM) field of the cavity.  The atom is initially prepared in a special state, to guarantee both a strong interaction with the cavity field and a decay time (to the atomic ground state) longer than the experiment time-scale. 
The first goal is obtained by choosing a hydrogenoid atom (\eg a Rubidium atom) prepared in a \emph{circular Rydberg state} with high orbital angular momentum, and so with a large dipole moment. The second goal requires instead that transitions to the other atomic states are not resonant with the radiation from the environment, typically in the visible or infrared spectrum. Since the atomic energies are $E_n = - \frac{R\sub{Rb}}{n^2}$ (where $R\sub{Rb}$ is the Rydberg constant, corrected for the Rubidium), one select $n \in \N$ in such a way that the transition frequency $\Om\sub{at} := \frac{1}{\hbar} (E_n - E_{n-1})$ is in the range of the microwaves (in the case of Rb, this is obtained by choosing $n =51$). The cavity is designed in such a way that a normal mode of the EM field has frequency $\omega$, which is almost resonant with $\Om\sub{at}$, \ie  $|\Om\sub{at} - \om| \ll \om, \Om\sub{at}$.      

Under these conditions, the only appreciable interactions are those among the two eigenstates $\ket{n}$ and $\ket{n-1}$, and the distinguished mode of the EM field with frequency $\omega$. The linear space generated by  $\ket{n}$ and $\ket{n-1}$ corresponds to the factor $\C^2$ in $\Hi$, while the quantized EM mode yields a quantum harmonic oscillator, corresponding to the factor 
$L^2(\R,\C)$. This explains the structure of the first two terms in \eqref{Rabi}.

A deeper understanding of \eqref{Rabi} is obtained by deriving it, heuristically, from the Pauli--Fierz model, see \cite[Chapter 13]{Spohn2004}. We consider a hydrogenoid atom, as, \eg Rubidium, interacting with the quantized electromagnetic (EM) field in the cavity.  The electron, with effective mass $m_*$ and charge $- e$, is described by a form factor $\ph \in C^{\infty}_0(\R^3)$, which is assumed to be a radial positive function, normalized so that $\int_{\R^3} \ph(x) dx = 1$. 
The atomic core is supposed at rest at the origin or, equivalently, moving across the cavity with constant velocity. 

Within the Pauli--Fierz model, the quantized EM fields are described by operator-valued distributions acting on the Hilbert space 
$\Hi\sub{at} \otimes \Hi\sub{f}$, where  $\Hi\sub{at} = L^2(\R^3, dr)$ corresponds to the atom, while 
$$
\Hi\sub{f} = \bigoplus_{n \in \N} \( L^2(\R^3, dk) \otimes \C^2 \)^{\otimes n}\sub{sym}
$$
is the bosonic Fock space corresponding to the field \cite{Spohn2004}. In the Coulomb gauge, the vector potential (in free space) reads 
\begin{equation} \label{Vector potential 2}
A_{\ph}(r)  = \sum_{\lambda =1}^2  \int_{\R^3} \! {dk}  \,\, \sqrt{\frac{\hbar c^2}{2 \omega(k)}}
\, \widehat \ph(k) \,  e_{\lambda}(k) \,   
\, \(  \eu^{\iu k \cdot r} \otimes  a(k, \lambda) + \eu^{- \iu k \cdot r} \otimes  a(k, \lambda)^\dag \), 
\end{equation} 
where $\om(k) = c |k|$, $\widehat \ph$ denotes the Fourier transform of $\ph$, the three vectors $\set{k/|k|, e_1(k), e_2(k)}$ form an orthonormal basis at every point $k \in \R^3$, and  $a(k, \lambda)$ and $a(k, \lambda)^\dag$ satisfy the canonical commutations relations. Analogously, the transverse electric field reads 
\begin{equation} \label{Electric field 2}
E_{\ph, \perp}(r)  = \sum_{\lambda =1}^2  \int_{\R^3} \! {dk}  \,\,  \sqrt{\frac{\hbar \omega(k)}{2 }}
\, \widehat \ph(k) \,\, e_{\lambda}(k) \, 
\,\,  \iu \(  \eu^{\iu k \cdot r} \otimes  a(k, \lambda) - \eu^{- \iu k \cdot r} \otimes  a(k, \lambda)^\dag \).
\end{equation}
Actually, since the field is confined in the cavity,  described by a compact set $C \subset \R^3$, 
the integral appearing in  \eqref{Vector potential 2} and \eqref{Electric field 2} can be replaced by an infinite sum $\sum_{k \in \mathcal{I}}$, where  
$\mathcal{I}$ labels the solutions to the eigenvalue equation $\Delta A = \lambda_k  A$ satisfying $\nabla \cdot A = 0$ and appropriate boundary conditions on $\partial C$, see \cite[Sec. 13.6]{Spohn2004}.  

The \emph{(spinless) Pauli--Fierz Hamiltonian} is the self-adjoint operator 
\begin{equation} \label{Pauli--Fierz}
H =  \frac{1}{2 m_*}  \( \iu \gp{\hbar} \nabla_{r}  + \frac{e}{c} A_{\ph}(r) \)^2  +  \Id \otimes H\sub{f} 
- e^2 \, V_{\ph, \textrm{Coul}}(r) \otimes \Id,
\end{equation}
where 
$$
H\sub{f} =  \sum_{\lambda =1}^2  \int_{\R^3} dk  \, \hbar \om(k) \, a(k, \lambda)^\dag  a(k, \lambda)
$$
corresponds to the field energy, while 
$$
V_{\ph, \textrm{Coul}}(r) = \int_{\R^3} \!\!\! dk \,\, \frac{|\widehat \ph (k)|^2}{|k|^2} \eu^{- \iu k \cdot r} 
=  \int_{\R^3} \!\!\! dx  \int_{\R^3} \!\!\! dy  \, \frac{\ph(x - r) \ph(y)}{ 4 \pi |x-y|}
$$
is the Coulomb interaction smeared through $\ph$. 

\medskip

To derive the Rabi Hamiltonian \eqref{Rabi} from \eqref{Pauli--Fierz}, several approximations are in order.
\footnote{Obviously, in the comparison, the order of factors in the tensor product should be reversed, so that 
$L^2(\R, \C) \otimes \C^2$ is an effective reduced space for $\Hi\sub{f} \otimes \Hi \sub{at}$.
Since both conventions are well-established,  we did not dare to reverse the order of factors in any of the two models, namely in \eqref{Rabi_original} and \eqref{Pauli--Fierz}.   
}

\medskip

First, since the electron is bound to the atomic core, \virg{one loses little by evaluating the vector potential at the origin instead of at $r$, the position of the electron} \cite[Sec. 13.7]{Spohn2004}.  Within such approximation, and after a suitable change of gauge, the Hamiltonian becomes
\begin{equation} \label{Dipole Hamiltonian}
H\sub{dip} =  \( - \frac{\gp{\hbar^2}}{2 m_*} \Delta_r +  V_{\ph, \textrm{Coul}}(r) \) \otimes \Id + \Id \otimes H\sub{f}   
- e \, r \cdot E_{\ph, \perp}(0)   +  \epsilon\sub{dip}(r),
\end{equation}
where
$\ve{r} \cdot \ve{E}(0) = \sum_{j=1}^3  r_j \otimes  E_j(0)$.
The expression for $H\sub{dip}$  involves the dipole operator $D = - e \, r$, thus explaining the name \emph{dipole approximation}.
Here a quadratic potential appears, namely
$$
 \epsilon\sub{dip}(r) = \frac{1}{2} \(  \frac{2}{3}  \int_{\R^3} \!\!\! dk \,  e^2 |\widehat \ph(k)|^2 \) r^2,
$$ 
which is usually neglected in this context. 

As a second crucial approximation, { \emph{only a distinguished mode of the cavity field}} is considered as interacting with the atom. If $k_*$ is its wavevector, the integrals (or sums) above reduce to the evaluation at $k_*$. 
Accordingly,  and assuming that  $\widehat \ph(k_*) = 1$,  one has
\begin{equation} \label{?}
\begin{aligned} 
&A_{\ph}(0) & = &  \quad \sum_{\lambda =1}^2  \sqrt{\frac{\hbar c^2}{2 \omega(k_*)}}
 \,\, e_{\lambda}(k_*)   
\,\,  \left[ \Id \otimes \( a(k_*, \lambda) +  a(k_*, \lambda)^\dag \) \right]  \\
& E_{\ph, \perp}(0) & =  &  \quad \sum_{\lambda =1}^2  \sqrt{\frac{\hbar \omega(k_*)}{2}}
\,\, e_{\lambda}(k_*)  
\,\,   \left[  \Id \otimes   \iu \(   a(k_*, \lambda) -  a(k_*, \lambda)^\dag \)  \right].  \\
&H\sub{f} &=& \quad  \sum_{\lambda =1}^2   \, \hbar \om(k_*) \, a(k_*, \lambda)^\dag  a(k_*, \lambda).
\end{aligned}
\end{equation}

Neglecting the sum over the polarization vectors $e_{\lambda}$'s, we select an index $\lambda_* \in \set{1,2}$ and we set 
$\om(k_*) \equiv \om$ and $a(k_*, \lambda_*) \equiv a\sub{c}$, where the subscript $\mathrm{c}$ stands for \emph{cavity}. 
Then one has the correspondences
\begin{equation} \label{Identifications}
A_{\ph}(0)  \simeq   a\sub{c} + a\sub{c}^\dag =  \sqrt{2} X\sub{c}, \qquad \mbox{and} \qquad    
E_{\ph, \perp}(0)  \simeq  \iu (a\sub{c} - a\sub{c}^\dag) =   \gp{ -\sqrt{2} P\sub{c}},
\end{equation}
where the last equality refers to the fact that the operators 
${\footnotesize \frac{1}{\sqrt{2}}}(a\sub{c} + a\sub{c}^\dag)$  and   ${\footnotesize \frac{- \iu}{\sqrt{2}}} (a\sub{c} - a\sub{c}^\dag)$
can be identified, up to unitary equivalence, with a canonical Schr\"{o}dinger pair $(X\sub{c}, P\sub{c})$ acting on $L^2(\R, dq)$.  
Next, an \gp{inverse Fourier transform }
$\mathcal{F}^{-1}: L^2(\R, dq) \rightarrow L^2(\R, dx)$ is performed, which \gp{intertwines the 
pair $(X\sub{c}, - P\sub{c})$  with the pair  $(P, X)$,} where $P = -\iu \partial_x$ and $(X\psi)(x) = x \psi(x)$. 
Correspondingly, $\mathcal{F}^{-1}  \, a \, \sub{c} \mathcal{F} = - \iu a$, so that $H\sub{f}$ is unitarily transformed into 
$\tilde H\sub{f} = \hbar \om\sub{c} a^\dag a$, which corresponds to the first addendum in \eqref{Rabi}, up to an irrelevant constant.  Moreover, 
\begin{equation} \label{Identifications_2}
A_{\ph}(0)  \simeq  - \iu (a - a^\dag) =  \gp{\sqrt{2} P}, \qquad \mbox{and} \qquad    
E_{\ph, \perp}(0)  \simeq  a + a^\dag =   \sqrt{2} X,
\end{equation}
which shows that the interaction term (the third addendum in \eqref{Rabi_original} or in  \eqref{Rabi}) is proportional to the electric field, which is assumed to be uniform in view of the dipole approximation. 


Third, one assumes that effectively {\emph{only two atomic states}} $\ket{n-1}\equiv \ket{g}$ and $\ket{n}\equiv \ket{e}$ are appreciably coupled to the radiation field. Then, the factor $\Hi\sub{at}$ is replaced by a $2$-dimensional Hilbert space
$$
\Hi_{2L} =  \mathrm{Span}_{\C}\set{\ket{g}, \ket{e}} \simeq \C^2,
$$
and the atomic Hamiltonian (the first term in \eqref{Dipole Hamiltonian}) by the 2-level Hamiltonian $\frac{\hbar \Om\sub{at}}{2} \sigma_3$, where $\hbar \Om\sub{at} = E_n - E_{n-1}$. 

Finally, if the two selected atomic states are {\emph{circular Rydberg states}} one has that the dipole operator 
 is off-diagonal in the basis $\set{\ket{g}, \ket{e}}$, namely
$$
\bra{g} D \ket{g} =0, \qquad   \bra{e} D \ket{e} =0.   
$$
Moreover, there exist $d > 0$ and a versor $e\sub{at} \in \C^3$ such that $d \, e\sub{at} = \bra{e} D \ket{g}$.
We identify a state $\psi\sub{at} \in \Hi_{2L}$ with the pair $(\psi_g, \psi_e) \in \C^2$ via 
$\psi\sub{at} =  \psi_g  \ket{g} +  \psi_e \ket{e}$, so that  
$$
D \psi\sub{at} = d \( \psi_g \, e\sub{at}^* \ket{e} + \psi_e \, e\sub{at} \ket{g} \).  
$$
By using the latter fact, one obtains
\begin{equation} \label{Final expression}
D \cdot  E_{\ph, \perp}(0) =  \sqrt{\frac{\hbar \omega(k_*)}{2}} d
\,\, e_{\lambda_*}(k_*) \cdot 
\,\,   \left[   e\sub{at}^* \(   \sigma_+ \otimes \iu  (a\sub{c} - a\sub{c}^{\dag}) \)  
+ e\sub{at} \(\sigma_{-} \otimes  \iu (a\sub{c} - a\sub{c}^{\dag}) \)
 \right],  \\
\end{equation}
where, with a little abuse of notation, we wrote $\sigma_{-} = \ket{g}\bra{e}$ and $\sigma_{+} = \ket{e}\bra{g}$.
When adapting the previous formula to concrete physical situations, by specializing to the actual values of $e\sub{at}$ and $e_{\lambda}(k_*)$, one obtains some terms analogous to the ones appearing in \eqref{Rabi}, possibly with different weights. 
In particular, if $e\sub{at}^*=e\sub{at}$, one obtains a coupling term proportional to $\sigma_1 \otimes P\sub{c}$, which, after the Fourier transform mentioned above, becomes $\sigma_1 \otimes X$ and so agrees with the third addendum in \eqref{Rabi_original} or in \eqref{Rabi}. 

This concludes the heuristic \virg{derivation} of the Rabi Hamiltonian, starting from a fundamental model as the one given by 
\eqref{Pauli--Fierz}.  On the other hand,  a rigorous derivation of the Hamiltonian \eqref{Rabi} from a mathematical model of non-relativistic QED, as, \eg the \emph{Pauli--Fierz model}, in a suitable scaling limit, is to our knowledge absent from the literature. 

\subsection{The Jaynes--Cummings model}
\label{Sec:JC}
 
The Rabi model presented in the previous section has been considered difficult to treat in view of
the widespread opinion that  it is not explicitly solvable (cf. \cite{Braak}).

For this reason Jaynes and Cummings \cite{Jaynes1963} proposed an approximated Hamiltonian, which is obtained from $H_\Rabi$ by neglecting the so-called counter-rotating terms. More precisely, the interaction term in  $H_\Rabi$, namely 
$$ \frac{g}{\sqrt{2}} \, (a^\dag+a) \sigma_1= \frac{g}{\sqrt{2}} \, (a^\dag \sigma_- + a \sigma_+ + a^\dag \sigma_+ + a \sigma_- ),$$
is replaced by a new interaction term where the last two terms in the right-hand side are neglected 
under the assumption that 
\begin{equation} \label{Almost-resonance}
|\Omega-\omega|\ll \Omega,\qquad g\ll \Omega,\omega.
\end{equation}

According to this procedure, one obtains
\begin{align}\label{JC Hamiltonian}
H_{\mathrm{JC}}&=\omega \left(a^\dag a+\frac12
\right)+\frac \Omega2\sigma_3+ \frac{g}{\sqrt{2}} \, (a^\dag \sigma_- +  a \sigma_+)\\
&=\frac \omega2 (P^2+X^2)\otimes \Id+ \Id  \otimes \frac{\Om}{2} \sigma_3
+ \frac{g}{2} (X\otimes \sigma_1 - P\otimes \sigma_2). \nonumber
\end{align}

\noindent This model is considerably simpler than the Rabi one, since it admits the conserved quantity 
$$ C=a^\dag a\otimes \Id + \Id\otimes \left(\begin{array}{cc}1&0\\0&0\end{array}\right),$$
which represents the total number of excitations. 
Indeed, the interaction term in $H_{\mathrm{JC}}$ contains two parts: the term $a^\dag \sigma_-$ creates an excitation of the bosonic mode and destroys one of the two-level system, while $a\sigma_+$ acts in the opposite way.

\medskip

The heuristic justification of the approximation leading to \eqref{JC Hamiltonian} is based on separation of time scales. 
Indeed, when considering the dynamics in the interaction picture, the terms  $a^\dag \sigma_-$  and $a \sigma_+$, which conserve the total number of excitations of the system, evolve periodically with frequency $|\Om - \om|$, while the remaining terms $a^\dag \sigma_+$ and  $a \sigma_- $ oscillate with frequency $\Om + \om$.  If the \emph{almost-resonance condition} \eqref{Almost-resonance} is satisfied, the latter oscillations average to zero on the longer time scale $|\Om - \om|^{-1}$.  

\noindent On the other hand, a rigorous mathematical justification for this approximation seems absent from the literature. However, as already noticed by  Rouchon in \cite{rouchon-st-louis}, it seems possible to obtain it by adapting the methods developed in \cite{Panati2002, PanatiSpohnTeufel2003, Teufel}. A similar task, by analytical methods, has been accomplished by Ervedoza and Puel on a related model \cite{Ervedoza2009b}.

Notice that in applications to circuits QED the assumption $g\ll \Omega,\omega$ is in general not satisfied, \gp{since it is possible to achieve spin-filed interactions which are not much smaller than the field energy \cite{Schusteretal}}. This is a motivation for the direct study of the Rabi model, which is pursued here.  

\subsection{Modeling an external control field}

In a wide variety of experimental situations one can act on the system by an external field. The goal of the controller could be to lead the system from a given initial state to a prescribed final one. 
For spin-boson models, this amounts to study the control problem 
\begin{equation}
\i \partial_t\psi(t)=H_0\psi(t) +H\sub{c}(u(t))\psi(t),
\label{controlled}
\end{equation}
where $H_0$ represents either the Rabi or the Jaynes--Cummings Hamiltonian and 
$H\sub{c}$ is a self-adjoint  operator describing the coupling between the system and the controlled external field. 
The operator $H\sub{c}$ depends on the control function $u$ which, in general,  takes values in $\C^m$.  

In most cases the external field can act on the bosonic mode only, while the spin mode is not directly accessible.
This leads to  a control Hamiltonian of the form
\begin{equation}
H\sub{c}(u(t))=h\sub{c}(u(t))\otimes\Id.
\label{control-ham}
\end{equation}
where $h\sub{c}(u(t))$ is a self-adjoint operator in $L^2(\R)$.
One of the simplest form for the operator  $h\sub{c}(u(t))$ is the following 
\begin{equation}
h\sub{c}(u(t))=u(t)X.
\label{linear-cont}
\end{equation}

As argued in Section \ref{Sec:CQED},  in the context of cavity QED and within the dipole approximation, the operator 
$X$ corresponds to the electric field at the origin, see  \eqref{Identifications_2}.  Therefore, in this context, the control \eqref{linear-cont} corresponds to an external EM field which rescales the value of $E_{\ph, \perp}(0)$ by a time-dependent factor $u(t)$. Notice that the linearity in $X$ is a consequence of the \emph{dipole approximation}. 

Simmetrically, one might also consider a control consisting of an external field which modulates the value of $A_{\ph}(0)$, as \eg an externally-controlled magnetic field. In view of \eqref{Identifications_2}, this amounts to consider a control term $\widetilde{h}\sub{c}(u(t))=u(t) P$, a problem which will be investigated elsewhere.

\subsection{Other related controlled models} 

Controllability of finite-dimensional approximations of spin-boson systems have first been obtained in the physics literature, via constructive methods, in \cite{Law1996} and \cite{Kneer1998}. 
The results have been extended to the full infinite-dimensional model in \cite{Yuan2007} via analysis of non-resonances and in \cite{Boscain_Adami} by adiabatic arguments (see also \cite{adiabatiko}).

A mathematical description of the controllability properties of a related model is given in \cite{Ervedoza2009b}.  The main technical tool is an explicit estimate of the approximation error with respect to the Lamb--Dicke limit. 

For the counterpart of the Eberly--Law model for more than one trapped ion an approximate controllability result is obtained  in \cite{Bloch2010} (see also \cite{Rangan2004a}), 
based on  the analysis on a sequence of nested finite-dimensional systems. In the recent papers \cite{Keyl2014,PaduroSigalotti} the (approximate) controllability of the system is established by considering  different families of controlled dynamics.

\section{Approximate controllability of the controlled Rabi model}

We consider here the approximate controllability problem for a system of the form \eqref{controlled}, where $H_0$ is the Rabi Hamiltonian and $H\sub{c}$ takes the form \eqref{control-ham}-\eqref{linear-cont}. 

The problem under consideration is then
\begin{equation}\label{control-problem}
\left\{
\begin{array}{l}
i\partial_t\psi=H_\Rabi\psi+u (X\otimes \Id) \psi,\\
u\in [0,\delta],\\
|\Omega-\omega|\ll \Omega,\\
\delta,\Omega>0.
\end{array}
\right.
\end{equation}

The main result of the paper is the following. 
The precise definition of approximate controllability will be given in the next section. 
It basically means that, for every choice of the initial and final state, there exists an admissible control 
law $u$, depending on the time, which steers
the initial state arbitrarily close to the final one.

\begin{theorem}\label{Tmain}
System \eqref{control-problem} 
is approximately controllable for almost every $g\in \R$.
\end{theorem}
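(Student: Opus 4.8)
The plan is to reduce the infinite-dimensional approximate controllability problem to a sufficient condition on a suitable Galerkin approximation, following the now-standard strategy for bilinear Schr\"odinger equations (in the spirit of Boscain--Chambrion--Sigalotti and related works). Concretely, I would work in the eigenbasis of the drift operator $H_\Rabi$. The unperturbed operator $\frac{\omega}{2}(P^2+X^2)\otimes\Id+\frac{\Omega}{2}\Id\otimes\sigma_3$ has pure point spectrum with eigenvalues $\omega(n+\tfrac12)\pm\tfrac{\Omega}{2}$, $n\in\N$, and the Rabi coupling $g\,X\otimes\sigma_1$ is a relatively bounded (indeed, relatively compact) perturbation, so $H_\Rabi$ remains self-adjoint with discrete spectrum and an orthonormal eigenbasis $(\phi_k)_{k\in\N}$ with eigenvalues $(\lambda_k)_{k\in\N}$. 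The control operator $X\otimes\Id$ is symmetric and, relative to $H_\Rabi$, sufficiently regular that the abstract well-posedness and approximation results apply; in particular one obtains that approximate controllability of \eqref{control-problem} follows once one exhibits, for the pair $(H_\Rabi,\,X\otimes\Id)$, a \emph{connectedness chain} of non-degenerate transitions, i.e.\ a collection of pairs $(j,k)$ connecting all eigenstates such that $\langle\phi_j,(X\otimes\Id)\phi_k\rangle\neq0$ and such that the corresponding spectral gaps $\lambda_j-\lambda_k$ are pairwise distinct and non-resonant (no gap equals a sum or combination of others arising among coupled states). The positivity constraint $u\in[0,\delta]$ is handled as usual by the trick of conjugating by the free evolution and exploiting that $[0,\delta]$ has nonempty interior, so it is not an essential obstruction.

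The first substantial step is to control the matrix elements $\langle\phi_j,(X\otimes\Id)\phi_k\rangle$ well enough to guarantee that the chosen chain has nonzero entries. Here I would use perturbation theory in $g$: at $g=0$ the eigenstates are $h_n\otimes e_\pm$ with $h_n$ the Hermite functions, and $X$ acts on Hermite functions by the classical three-term recursion $Xh_n=\sqrt{n/2}\,h_{n-1}+\sqrt{(n+1)/2}\,h_{n+1}$, so $X\otimes\Id$ already connects $h_n\otimes e_\sigma$ to $h_{n\pm1}\otimes e_\sigma$. For small $g$ the perturbed matrix elements are analytic in $g$ and close to these, hence the relevant ones are nonzero for $g$ outside a discrete set; combined with analyticity one concludes the coupling non-degeneracy holds for all but countably many $g$. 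The second, and I expect \emph{the main}, obstacle is the spectral non-resonance condition. At $g=0$ the spectrum $\{\omega(n+\tfrac12)\pm\tfrac{\Omega}{2}\}$ is highly resonant: it is contained in two arithmetic progressions, so countless gaps coincide, and the standard Galerkin criterion fails for the unperturbed operator. The resolution is to treat $g$ as the perturbation parameter and show that the eigenvalue gaps, which are real-analytic (in fact, by Kato's theory, analytic) functions of $g$ on a neighbourhood of $0$, are \emph{not} identically equal to one another (nor to the forbidden linear combinations). To do this I would compute the first nonvanishing term in the $g$-expansion of $\lambda_k(g)$ — generically the second-order Rayleigh--Schr\"odinger correction $g^2\sum_{m}|\langle\phi_m^{(0)},(X\otimes\sigma_1)\phi_k^{(0)}\rangle|^2/(\lambda_k^{(0)}-\lambda_m^{(0)})$ — and check that these quadratic coefficients separate the gaps that coincide at $g=0$. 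Since an analytic function on a connected domain is either identically zero or has isolated zeros, each offending resonance identity then fails for all but countably many $g$, and a countable union of countable sets is still negligible, giving the "almost every $g$" conclusion.

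Putting it together, the proof has the following skeleton. Step 1: establish the functional-analytic setup — $H_\Rabi$ is self-adjoint with compact resolvent; $(\phi_k(g),\lambda_k(g))$ depend analytically on $g$ near each point where eigenvalues stay simple (and one argues eigenvalues are generically simple, again by an analyticity/countability argument, or works with the spin-parity symmetry of the Rabi model to split $\Hi$ into two sectors on which eigenvalues are simple); $X\otimes\Id$ satisfies the regularity hypotheses needed to invoke the Galerkin approximate-controllability theorem with constrained positive controls. Step 2: fix a candidate connectedness chain built from the $n\mapsto n\pm1$ Hermite transitions within each spin sector plus a transition linking the two sectors (the Rabi term itself mixes them, so for $g\neq0$ such a coupling exists at the level of the perturbed eigenvectors). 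Step 3: verify coupling non-degeneracy along the chain for a.e.\ $g$ via analyticity. Step 4 (the crux): verify spectral non-resonance for a.e.\ $g$ via the second-order perturbative expansion of the eigenvalues, showing the finitely many resonance identities relevant to the chain are non-trivial analytic relations in $g$. Step 5: collect the countably many excluded null sets, conclude. The delicate points to watch are: making precise which finite set of resonance conditions actually needs to be checked (this is where the explicit structure of the chain matters, to keep the list finite at each Galerkin level), and handling possible eigenvalue crossings in $g$ — the Rabi model's $\Z_2$ parity symmetry is the natural tool to guarantee simplicity within each sector and thus genuine analyticity of the branches $\lambda_k(g)$.
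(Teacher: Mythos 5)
Your overall strategy is the same as the paper's: invoke the abstract approximate-controllability criterion based on a non-resonant chain of connectedness, use global analyticity of the eigenpairs in $g$, and break the resonances of the $g=0$ spectrum by Rayleigh--Schr\"odinger perturbation theory, discarding a countable union of null sets of $g$. However, there is a genuine gap at what you yourself identify as the crux (Step 4): the \emph{second-order} corrections do not separate the resonances that matter for your chain. One computes $E_{n,s}^{(2)}=\frac{\omega+s\,\Omega(2n+1)}{2(\Omega^2-\omega^2)}$, which is \emph{affine} in the photon number $n$. Your chain uses the same-spin transitions $(n,s)\leftrightarrow(n+1,s)$, whose gap at $g=0$ equals $\omega$, exactly as for the neighbouring pair $(n+1,s)\leftrightarrow(n+2,s)$, which shares the index $(n+1,s)$ and hence is not excused by the disjointness clause in the non-resonance condition. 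Since the second-order gap corrections are differences of an affine function of $n$ at equal increments, they coincide identically: second order leaves these resonances intact. Moreover all odd-order corrections vanish (the perturbation $X\otimes\sigma_1$ flips the spin index, so odd products of its matrix elements cannot close a loop), so third order gives nothing either. The paper has to push the expansion to \emph{fourth} order, where $E_{n,s}^{(4)}$ acquires a genuinely quadratic dependence on $n$ with nonzero leading coefficient $C_2(s)=s\,\Omega(\omega^2+3\Omega^2)/\bigl(2(\omega^2-\Omega^2)^3\bigr)$, and it is this quadratic term that finally distinguishes gaps with equal $n$-increment within a spin sector. Without this (or an equivalent argument showing the relevant gap differences are non-constant analytic functions of $g$), your Step 4 fails precisely on the transitions your chain is built from.

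Two smaller points. First, your claim that the perturbed eigenvectors are coupled across the two spin sectors ``because the Rabi term mixes them'' needs an actual computation: the first-order contribution to $\langle \Phi_{\bf j}^g,(X\otimes\Id)\Phi_{\bf k}^g\rangle$ for $s({\bf j})=-s({\bf k})$, $n({\bf j})=n({\bf k})$ is the sum of two terms which could in principle cancel; in the paper they sum to $\omega/(\Omega^2-\omega^2)\neq 0$, which is what legitimates the cross-sector edges of the chain. Second, the regime $|\Omega-\omega|\ll\Omega$ includes $\omega=\Omega$, where the unperturbed eigenvalues are doubly degenerate; the parity-symmetry remark is not a substitute for the degenerate perturbation analysis (choice of the correct basis in each two-dimensional eigenspace and first-order splitting of the eigenvalues), which the paper carries out separately and which also changes which matrix elements of $X\otimes\Id$ must be shown to be nonzero at $g=0$.
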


The proof of Theorem~\ref{Tmain} goes by studying the applicability of a general approximate controllability result in dependence on the parameter $g$.
In order to do so, we 
 have to use perturbation theory in the parameter $g$ up to order 4.

\section{An approximate controllability result}
\label{s-recall}

We are going to recall a general controllability result for bilinear quantum systems in an abstract setting.

In a separable Hilbert space ${H}$, endowed with the Hermitian product $\langle \cdot,\cdot\rangle$, we consider the following control system
\begin{equation}\label{EQ_main}
\frac{d}{dt} \psi=(A+u(t) B) \psi,\quad u(t)\in U,
\end{equation}
where $(A,B,U)$ satisfies
the following
assumption.

\goodbreak 

\begin{assumption}\label{ASS_1}
{$U$ is a subset of $\R$ and} $(A,B)$ is a pair of (possibly unbounded) linear operators in $H$ such that
\begin{enumerate}
\item $A$ is skew-adjoint on its domain $D(A)$;
\item there exists a Hilbert basis $(\phi_k)_{k\in \mathbf{N}}$ of $H$ consisting of eigenvectors of $A$: for every $k$, $A \phi_k =\mathrm{i} \lambda_k \phi_k$ with $\lambda_k$ in $\mathbf{R}$;
\item for every $j$ in $\mathbf{N}$, $\phi_j$ is in the domain $D(B)$  of $B$;
\item  $A+u B $ is essentially skew-adjoint for every $u\in U$; 
\item $\langle B \phi_j,\phi_k\rangle=0$ for every $j,k$ in $\mathbf{N}$ such that $\lambda_j=\lambda_k$ and $j\neq k$.
\end{enumerate}
\end{assumption}

{If $(A,B,U)$ satisfies Assumption \ref{ASS_1}, then} $A+uB$ generates a unitary group 
 $t\mapsto e^{t(A+uB)}$. By concatenation, one can define the solution of (\ref{EQ_main}) for every piecewise constant {function} $u$ taking values in $U$, for every initial condition $\psi_0$ given at time $t_0$. We denote this solution {by} $t\mapsto \Upsilon^u_{t,t_0} \psi_0$.

For every $j,k\in\mathbf{N}$,  we denote $b_{jk}=\langle \phi_j, B \phi_k \rangle$.
A pair $(j,k)$ in $\mathbf{N}^2$ is a {\emph{non-resonant transition}} of $(A,B)$ if 
$b_{jk}  \neq 0$  and,  for every $l,m$,
$|\lambda_j-\lambda_k|=|\lambda_l-\lambda_m|$ implies $\{j,k\}=\{l,m\}$ or $\{l,m\}\cap\{j,k\}=\emptyset$.

A subset $S$ of $\mathbf{N}^2$ is a \emph{chain of connectedness} of $(A,B)$ if  for every $j,k$ in $\mathbf{N}$, 
there exists a finite sequence $p_1=j,p_2,\ldots,p_r=k$   for which {$(p_l, p_{l+1}) \in S$ for every $l$} and
$\langle \phi_{p_{l+1}}, B \phi_{p_l} \rangle \neq 0$ for every $l=1,\ldots,r-1$. A chain of connectedness $S$ 
of $(A,B)$ is {\emph{non-resonant}} if every $(j,k)$ in $S$ is a {non-resonant } tran\-sition of $(A,B)$.

\begin{figure}
\begin{center}
\includegraphics[width=6cm]{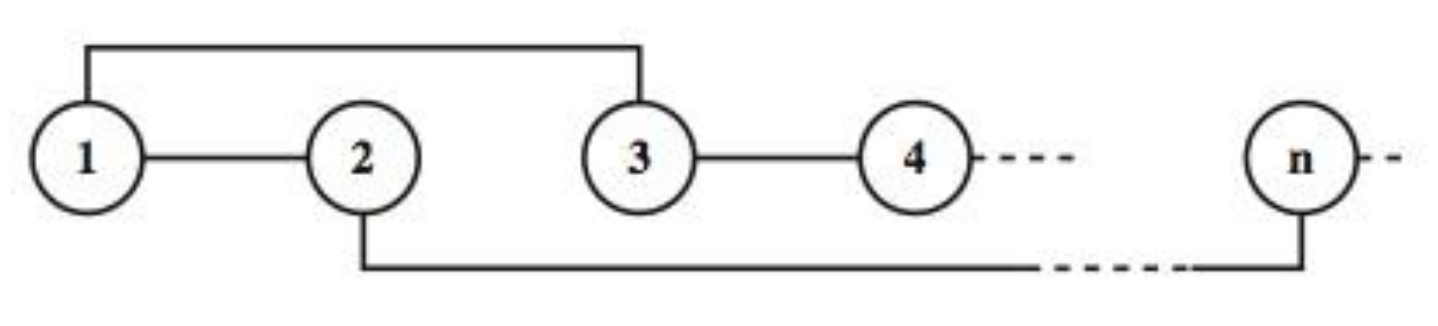}
\caption{\small Each vertex of the graph represents an eigenstate of $A$ (when the
spectrum is not simple, several nodes may be attached to the same
eigenvalue). An edge links two vertices if and only if $B$ connects
the corresponding eigenstates. In this example, $\langle \phi_1,
 B \phi_2 \rangle $ and  $\langle\phi_1, B \phi_3 \rangle $ are not
zero, while  $\langle \phi_1, B  \phi_4 \rangle =\langle \phi_2,
B  \phi_3 \rangle=\langle \phi_2, B  \phi_4 \rangle=0$.}
\label{fig:graphe}
\end{center}
\end{figure}

\medskip

\begin{definition}
Let $(A,B,U)$ satisfy Assumption \ref{ASS_1}.
We say that \eqref{EQ_main} is \emph{approximately controllable} if for every $\varepsilon>0$, for every
$\psi_0,\psi_1\in H$,
there exists a piecewise constant function $u_\varepsilon:[0,T_\varepsilon]\to U$ such that
$\|\Upsilon^u_{T_\varepsilon,0}\psi_0-\psi_1\|<\varepsilon$.
\end{definition}

\medskip

\begin{theorem}[\cite{Schrod2}]\label{pro_approx_contr_norme_H}
{Assume that $[0,\delta]\subset U$ for some $\delta>0$ and let} $(A,B,U)$ satisfy Assumption \ref{ASS_1} and admit a {non-resonant} chain of connectedness. Then
system \eqref{EQ_main} is
approximately controllable.
\end{theorem}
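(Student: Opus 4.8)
The plan is to prove the abstract statement by the Lie--Galerkin strategy: establish approximate controllability of the finite-dimensional truncations and then transfer it to $H$. Write $A^{(N)}$ and $B^{(N)}$ for the compressions of $A$ and $B$ to $\mathcal H_N=\mathrm{span}(\phi_1,\dots,\phi_N)$, and let $\Upsilon^{u,(N)}$ denote the corresponding propagator on $\mathcal H_N$, which is unitary since $A^{(N)}$ is skew-Hermitian and $B^{(N)}$ Hermitian. First I would pass to the interaction picture, setting $\xi(t)=e^{-tA^{(N)}}\psi(t)$, so that $\dot\xi=u(t)\,\widetilde B^{(N)}(t)\,\xi$ with $\widetilde B^{(N)}(t)=e^{-tA^{(N)}}B^{(N)}e^{tA^{(N)}}$, whose $(p,q)$ entry is $b_{pq}\,e^{\iu(\lambda_q-\lambda_p)t}$.

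The core is a single-transition lemma. Fix a non-resonant transition $(j,k)$ with Bohr frequency $\eta=|\lambda_j-\lambda_k|$ and drive the system with $u(t)=\bar u+v\cos(\eta t)$, chosen so that $u(t)\in[0,\delta]$; this is exactly where the hypothesis $[0,\delta]\subset U$ enters, as it suffices to take $0\le\bar u-|v|$ and $\bar u+|v|\le\delta$. Averaging $\widetilde B^{(N)}(t)$ against this control over long times, the only surviving off-diagonal entries are those $(p,q)$ with $|\lambda_p-\lambda_q|=\eta$; by non-resonance every such pair is either $\{j,k\}$ itself or disjoint from it, while item 5 of Assumption~\ref{ASS_1} guarantees that the zero-frequency contribution remains diagonal. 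Hence the averaged generator is block-diagonal, with a genuine coupling $\propto v\,b_{jk}$ in the $\{j,k\}$ plane, harmless rotations in $2$-planes orthogonal to it, and a diagonal phase from $\bar u$. I would make this rigorous by a standard averaging/convergence estimate for the finite-dimensional bilinear equation, concluding that the attainable propagators on $\mathcal H_N$ approximate, to arbitrary precision, a rotation by any prescribed angle in $\mathrm{span}(\phi_j,\phi_k)$.

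Next I would exploit the non-resonant chain of connectedness $S$: given $\phi_a,\phi_b$, pick a path $a=p_1,\dots,p_r=b$ with $(p_l,p_{l+1})\in S$ and compose the elementary rotations above. A state supported on $\mathrm{span}(\phi_{p_l},\phi_{p_{l+1}})$ is unaffected by the simultaneous rotations in the disjoint blocks, so population is transferred cleanly along the path and any eigenstate can be driven arbitrarily close to any other. Combined with the phase rotations generated by the free flow $e^{tA^{(N)}}$, these elementary rotations generate, in the limit, a group acting transitively on the unit sphere of $\mathcal H_N$ (the Lie-algebraic fact being that off-diagonal generators indexed by the edges of a connected graph, together with the diagonal ones, span $\mathfrak{su}(N)\oplus\mathfrak u(1)$, with the conjugation by the free phases supplying the missing axis in each block). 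This yields approximate controllability of the $N$-dimensional system, with control times and amplitudes that I would keep track of.

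Finally, I would transfer this to $H$ by a good-Galerkin-approximation argument: for piecewise-constant controls one shows $\|(\Upsilon^{u}_{T,0}-\Upsilon^{u,(N)}_{T,0})\phi_k\|\to 0$ as $N\to\infty$, uniformly over the bounded-variation, bounded-amplitude controls produced above, and then uses unitarity to pass from the basis vectors to arbitrary $\psi_0,\psi_1\in H$. I expect this last step to be the main obstacle, because $B$ is only assumed densely defined and possibly unbounded: the truncation error must be estimated through the tails $\langle\phi_m,B\phi_k\rangle$ for $m>N$ and controlled uniformly in the controls, which requires care in the choice of norm and in exploiting the essential skew-adjointness of $A+uB$ (item 4 of Assumption~\ref{ASS_1}) to prevent leakage of energy into high modes. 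The secondary difficulty is the bookkeeping in the averaging lemma, ensuring that the disjoint-block rotations allowed by the weak non-resonance condition never obstruct the transfer along the chain.
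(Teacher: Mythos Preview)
The paper does not contain a proof of this theorem: it is stated with attribution to \cite{Schrod2} and then used as a black box in Section~\ref{s-Tmain} to prove Theorem~\ref{Tmain}. There is therefore no proof in the present paper against which to compare your proposal.

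That said, your sketch follows the Lie--Galerkin strategy that underlies the result in the cited literature: pass to the interaction picture, drive with a resonant periodic control and use time-averaging to isolate a single non-resonant transition (the disjoint-block structure permitted by the weak non-resonance condition being harmless for a state already supported on the selected pair), concatenate along the chain of connectedness to reach any eigenstate from any other, and finally transfer from the finite-dimensional truncations to $H$ via a uniform Galerkin estimate. You have correctly identified the two genuinely delicate points---the bookkeeping for the disjoint blocks and, above all, the uniform truncation estimate when $B$ is unbounded. As a high-level plan this is sound; the last step is indeed where the substantial analytic work lies in the original references, and it is not a routine consequence of Assumption~\ref{ASS_1} alone but requires a careful argument controlling leakage into high modes uniformly over the class of controls produced by the averaging procedure.
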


\section{
Proof of Theorem~\ref{Tmain}}\label{s-Tmain}

In this section, we consider the approximate controllability problem for \eqref{control-problem} and we prove 
Theorem~\ref{Tmain}. The proof of the theorem is based on a suitable application of
Theorem~\ref{pro_approx_contr_norme_H}.   

\medskip


The strategy of the proof is the following. We first show in Subsection \ref{Sec_nonresonant} that,
for almost every $g$ in $\R$, some relevant 
pairs of eigenvalues of  $H_\Rabi$ satisfy the non-resonance condition, see \eqref{ijkl}. 
This goal is \gp{achieved} by exploiting the analyticity of the eigenvalues and by using perturbation theory.
Then, in Subsection \ref{Sec_coupling}, we prove that these pairs of eigenvalues correspond to {non-resonant}
 transitions, according to the definition above. \gp{The degenerate case $\omega = \Omega$ is treated separately 
 in Section \ref{case=}}
 

\medskip


Preliminarily, we  introduce some additional notations.
Denote by $H_{\Rabi,0}$ the Hamiltonian $H_{\Rabi}$ where we set $g=0$.
The general form of $H_\Rabi$ is then
\begin{equation}\label{RabiV}
H_\Rabi=H_{\Rabi,0}+g V,
\end{equation}
where $V$ is the operator $X\otimes \sigma_1$. 

Let $(\varphi_j)_{j \in\N}$ be the standard Hilbert basis of $L^2(\R,\C)$
given by real eigenfunctions of $-\partial_x ^2+x^2$, so that
$(-\partial_x ^2+x^2)\varphi_j=(2j+1)\varphi_j$ and $\int_\R x \varphi_j(x)\varphi_{j+1}(x)dx=\sqrt{(j+1)/2}$ for $j \geq 0$.

Let $\nu_1=(1,0)^T$, $\nu_{-1}=(0,1)^T$ be the canonical orthonormal basis of $\C^2$. 
Based on $(\varphi_j)_{j\in\N}$, we obtain a Hilbert basis 
of factorized eigenstates $\Phi_{j,s}=\varphi_j \otimes \nu_s$, $j\in\N$, $s\in\{-1,1\}$,
of $H_{\Rabi,0}$ whose corresponding eigenvalues are
\begin{equation}\label{Ejs}
E_{j,s}=\omega\left(j+\frac12\right)+\frac{s}2\Omega.
\end{equation}
If $\Omega$ is not an integer multiple  of $\omega$ then each eigenvalue $E_{j,s}$ is simple. (See Figure~\ref{e-vals}.)
In the following, for ease of notations, we write in bold the elements of $\N\times\{-1,1\}$, and for every  ${\bf j}\in \N\times\{-1,1\}$ we define $n({\bf j}),s({\bf j})$ in such a way that  ${\bf j}=(n({\bf j}),s({\bf j}))$.

For $g\in \R$, denote by $E_{\bf j}^g$, ${\bf j}\in \N\times \{-1,1\}$, the eigenvalues of
$H_\Rabi$ repeated according to their multiplicities, and by
$\Phi_{\bf j}^g$, ${\bf j}\in \N\times \{-1,1\}$, an orthonormal basis of corresponding eigenstates.
By a consequence of perturbation theory, detailed in the Appendix,  we can assume that, up to the choice of a suitable labelling, each map $\R\ni g\mapsto (E_{{\bf j}}^{g},\Phi_{{\bf j}}^{g})\in \R\times L^2(\R,\C)$ is analytic, with 
$E_{{\bf j}}^0=E_{{\bf j}}$, where $E_{{\bf j}}$ is defined in \eqref{Ejs}. 
When $E_{{\bf j}}$ is simple 
we can assume, in addition, that 
$\Phi_{{\bf j}}^0=\Phi_{{\bf j}}$.

Under the assumption $|\Omega-\omega|\ll \Omega$, the only case in which $H_{\Rabi,0}$ has multiple eigenvalues is when $\omega=\Omega$. 
If this is the case all eigenvalues different from the lowest one are double.  
Equation \eqref{splitting}  allows to identify the splitting of the two-dimensional eigenspaces 
induced by the perturbation $g\mapsto H_{\Rabi,0}+g V$, as detailed in Section~\ref{case=}.

In order to study the first and higher-order derivatives  of $g\mapsto E_{j,s}^g$ at $g=0$, it is useful to introduce the quantities
\begin{align}
V_{{\bf i},{\bf j}}& = \langle \Phi_{\bf i}, (x\otimes \sigma_1) \Phi_{\bf j}\rangle=\left(\delta_{n({\bf i}),n({\bf j})-1}\sqrt{\frac{n({\bf j})}2}
+\delta_{n({\bf i}),n({\bf j})+1}\sqrt{\frac{n({\bf j}) +1 }2}
\right) (1-\delta_{s({\bf i}),s({\bf j})}). \label{interaction-matrix}
\end{align}


\subsection{Case $\omega\ne \Omega$}

\subsubsection{Step I: Relevant eigenvalue pairs  are non-resonant.}
\label{Sec_nonresonant}

Let us first prove that for almost every $g\in \R$ and every ${\bf i,j,k,l}\in \N\times\{-1,1\}$, with $({\bf i,j})\neq ({\bf k,l})$ and ${\bf i\neq j}$, 
one has $E_{\bf i}^g-E_{\bf j}^g\neq E_{\bf k}^g-E_{\bf l}^g$. In order to do so, {we} observe that it is enough to show that for fixed ${\bf i,j,k,l}\in \N\times\{-1,1\}$ as before, the set
\begin{equation}\label{ijkl}
S_{\bf i,j,k,l}=\{g\mid E_{\bf i}^g-E_{\bf j}^g\neq E_{\bf k}^g-E_{\bf l}^g \}
\end{equation}
 is of full measure. By the analytic dependence on $g$ of the eigenvalues of $H_\Rabi$, this is equivalent to say that 
 $g\mapsto E_{\bf i}^g-E_{\bf j}^g$ and $g\mapsto E_{\bf k}^g-E_{\bf l}^g$ have different Taylor expansions 
 at $g=0$.

Let us consider the Taylor expansion
$$E_{\bf j}^g = E_{\bf j}+\sum_{m=1}^{\infty} g^m E_{\bf j}^{(m)}.$$
The computation of the  coefficients $E_{\bf j}^{(m)}$
carried on
below
is based on the Rayleigh--Schr\"odinger series
(see, for instance, \cite[Chapter XII]{reed_simon}).

First of all we observe that $E_{\bf i} - E_{\bf j} = E_{\bf k} - E_{\bf l}$ is equivalent to $n({\bf i})-n({\bf j}) = n({\bf k})-n({\bf l})$ and $s({\bf i})-s({\bf j}) = s({\bf k})-s({\bf l})$, in the case in which $\Omega$ is not an integer multiple of $\omega/2$. If $\Omega=(2m+1)\omega/2$ for some non-negative integer $m$, then $E_{\bf i} - E_{\bf j} = E_{\bf k} - E_{\bf l}$ implies
\begin{align*}
n({\bf i}) + n({\bf l})-n({\bf j}) - n({\bf k}) = \frac{2m+1}{4} \big(s({\bf j}) + s({\bf k}) -s({\bf i})  - s({\bf l})\big),
\end{align*}
and thus, if the left-hand side is an integer number different from zero, it must be $|s({\bf j}) + s({\bf k}) -s({\bf i})  - s({\bf l})| = 4$, that is $s({\bf j}) = s({\bf k}) = -s({\bf i}) = - s({\bf l})$.

\medskip

The term  $E_{\bf j}^{(1)}$ coincides  with $V_{\bf j,j}=\langle \Phi_{\bf j}, (X\otimes \sigma_1) \Phi_{\bf j}\rangle$, thus we deduce from \eqref{interaction-matrix} that $E_{\bf j}^{(1)}=0$ for every ${\bf j}$.

\medskip

Following \cite{reed_simon} we have that 
$$E_{\bf j}^{(2)} = -\sum_{{\bf m} \neq {\bf j}} (E_{\bf m}-E_{\bf j})^{-1}V_{\bf j,m}V_{\bf m,j}.$$ Thus
\begin{align*}
E_{\bf j}^{(2)}&= -\sum_{{\bf m} \neq {\bf j}} (E_{\bf m}-E_{\bf j})^{-1}
(1-\delta_{s({\bf j}),s({\bf m})})^2 \left(\delta_{n({\bf j}),n({\bf m})-1}\sqrt{\frac{n({\bf j})+1}2}+\delta_{n({\bf j}),n({\bf m})+1}\sqrt{\frac{n({\bf j})}2}\right)^2\\
&= - (E_{n({\bf j})+1,-s({\bf j})}-E_{\bf j})^{-1}\,\frac{n({\bf j})+1}2 - (E_{n({\bf j})-1,-s({\bf j})}-E_{\bf j})^{-1}\,\frac{n({\bf j})}2\\
&= - (\omega-s({\bf j})\Omega)^{-1} \,\frac{n({\bf j})+1}2 + (\omega+s({\bf j})\Omega)^{-1}\,\frac{n({\bf j})}2\\
&= \frac{\omega +s({\bf j})\Om (2n({\bf j})+1)}{2(\Omega^2-\omega^2)}.
\end{align*}
Notice that the computation above is correct also for $n({\bf j})=0$, even if in this case 
$E_{n({\bf j})-1,-s({\bf j})}$ is not a well defined eigenvalue of $H_{\Rabi,0}$. Indeed, in this case the term 
$$(E_{n({\bf j})-1,-s({\bf j})}-E_{\bf j})^{-1}\frac{n({\bf j})}{2}$$ 
counts as zero.

Let us identify 
the values ${\bf i,j,k,l}$ such that $E_{\bf i}^{(2)}-E_{\bf j}^{(2)}=E_{\bf k}^{(2)}-E_{\bf l}^{(2)}$ under the assumption that $E_{\bf i} - E_{\bf j} = E_{\bf k} - E_{\bf l}$. Recall that we also assume that 
$({\bf i,j})\neq ({\bf k,l})$ and ${\bf i\neq j}$. 
From the above expression of $E_{\bf j}^{(2)}$ we have 
\begin{align}
s({\bf i}) \left(2 n({\bf i})+1\right) - s({\bf j}) \left(2n({\bf j})+1\right) =  s({\bf k}) \left(2n({\bf k})+1\right)-s({\bf l}) \left(2n({\bf l})+1\right).\label{rel-1}
 \end{align}
 If $\Omega=(2m+1)\omega/2$ for some non-negative integer $m$ and $s({\bf j}) = s({\bf k}) = -s({\bf i}) = - s({\bf l})$
then \eqref{rel-1} gives  $n({\bf i}) + n({\bf j}) +n({\bf k}) +n({\bf l}) + 2 =0$, which is impossible being the addends \gp{non-negative}.

The remaining case is when $n({\bf i})-n({\bf j}) = n({\bf k})-n({\bf l})$ and $s({\bf i})-s({\bf j}) = s({\bf k})-s({\bf l})$, in which case
\begin{align}
s({\bf i}) n({\bf i}) - s({\bf j}) n({\bf j}) = s({\bf k}) n({\bf k})-s({\bf l}) n({\bf l}).\label{rel}
\end{align}
Then, either $s({\bf i})=s({\bf j})$, which implies $s({\bf k})=s({\bf l})$ and then, by~\eqref{rel}, $s({\bf i})=s({\bf j})=s({\bf k})=s({\bf l})$, or $s({\bf i})=- s({\bf j})$, which implies $s({\bf k})=- s({\bf l})$ and then, by~\eqref{rel}, $n({\bf i})+n({\bf j})=n({\bf k})+n({\bf l})$. In the latter case it must be ${\bf i}={\bf k}$ and ${\bf j}={\bf l}$, which is excluded by assumption.
Therefore the nontrivial quadruples {\bf i},{\bf j},{\bf k},{\bf l} satisfying both the equalities $E_{\bf i} - E_{\bf j} = E_{\bf k} - E_{\bf l}$ and $E_{\bf i}^{(2)}-E_{\bf j}^{(2)}=E_{\bf k}^{(2)}-E_{\bf l}^{(2)}$
are those for which  $s({\bf i})=s({\bf j})=s({\bf k})=s({\bf l})$ and $n({\bf i})-n({\bf j}) = n({\bf k})-n({\bf l})$.

\medskip

Let us now evaluate the terms $E_{\bf j}^{(3)}$ as in \cite{reed_simon}. We have
\begin{align*}
E_{\bf j}^{(3)} &= \sum_{{\bf m} \neq {\bf j},{\bf p} \neq {\bf j}} (E_{\bf m}-E_{\bf j})^{-1}(E_{\bf p}-E_{\bf j})^{-1} V_{\bf j,m}V_{\bf m,p}V_{\bf p,j}-\sum_{\bf m\neq j} (E_{\bf m}-E_{\bf j})^{-2} V_{\bf j,m}V_{\bf m,j} V_{\bf j,j}.
\end{align*}
Since $V_{\bf a,b}\neq 0$ only if $s({\bf a}) = -s({\bf b})$ it turns out that $V_{\bf j,m}$ and $V_{\bf m,p}$ are different from $0$ only if $s({\bf j}) = s({\bf p}) =  -s({\bf m})$, but then $V_{\bf p,j}=0$. Thus, recalling that $V_{\bf j,j}=0$, we have
$E_{\bf j}^{(3)} = 0$ for every ${\bf j}\in\N\times\{-1,1\}$.

\medskip

We are going to complete the proof that
the set $S_{\bf i,j,k,l}$ defined as in
\eqref{ijkl} has full measure by showing that
if ${\bf i},{\bf j},{\bf k},{\bf l}\in \N\times\{-1,1\}$
are such that $({\bf i},{\bf j})\neq ({\bf k},{\bf l})$, ${\bf i\neq j}$, 
$n({\bf i})-n({\bf j}) = n({\bf k})-n({\bf l})$ (which follows from $E_{\bf i}-E_{\bf j}= E_{\bf k}-E_{\bf l}$)
 and $s({\bf i})=s({\bf j})=s({\bf k})=s({\bf l})$ (which follows from
$E_{\bf i}^{(2)}-E_{\bf j}^{(2)}= E_{\bf k}^{(2)}-E_{\bf l}^{(2)}$), then
$E_{\bf i}^{(4)}-E_{\bf j}^{(4)}\neq E_{\bf k}^{(4)}-E_{\bf l}^{(4)}$.

\medskip


The general formula for $E_{\bf j}^{(4)}$ (see \cite{reed_simon}) is 

\begin{align} \label{E_4}
E_{\bf j}^{(4)} &=  -\sum_{{\bf m} \neq {\bf j},{\bf p} \neq {\bf j},{\bf q} \neq {\bf j}} (E_{\bf m}-E_{\bf j})^{-1}(E_{\bf p}-E_{\bf j})^{-1} (E_{\bf q}-E_{\bf j})^{-1} V_{\bf j,m}V_{\bf m,p}V_{\bf p,q}V_{\bf q,j} \\ 
\nonumber
&\quad +\sum_{{\bf m} \neq {\bf j},{\bf p} \neq {\bf j}}\hspace{-2mm}V_{\bf j,j}V_{\bf j,m}V_{\bf m,p} V_{\bf p,j}
[(E_{\bf m}-E_{\bf j})^{-1}
(E_{\bf p}-E_{\bf j})^{-2}  + (E_{\bf m}-E_{\bf j})^{-2}(E_{\bf p}-E_{\bf j})^{-1}] \\ \nonumber
&\quad +\sum_{{\bf m} \neq {\bf j},{\bf p} \neq {\bf j}} (E_{\bf m}-E_{\bf j})^{-2}(E_{\bf p}-E_{\bf j})^{-1}
 V_{\bf j,m}V_{\bf m,j}V_{\bf j,p}V_{\bf p,j} \nonumber -\sum_{\bf m\neq j}  (E_{\bf m}-E_{\bf j})^{-3} V_{\bf j,m} V_{\bf m,j} V_{\bf j,j}^2.
\end{align}
Since $V_{\bf j,j}=0$, only the first and third term of the right-hand side must be evaluated.

Let us compute the first term {in \eqref{E_4}.} In order to avoid null terms we must assume $s({\bf j}) = -s({\bf m}) =s({\bf p})= -s({\bf q})$ and thus $n({\bf j}) \neq n({\bf p})$. Therefore the only nonzero terms in the sum are given by $n({\bf j}) = n({\bf m})+1 = n({\bf p})+2 =  n({\bf q})+1$ (if $n({\bf j})>1$) and $n({\bf j}) = n({\bf m})-1 = n({\bf p})-2 =  n({\bf q})-1$.
We have 
\begin{align*}
\lefteqn{\sum_{{\bf k} \neq {\bf j},{\bf l} \neq {\bf j},{\bf i} \neq {\bf j}} (E_{\bf k}-E_{\bf j})^{-1}(E_{\bf l}-E_{\bf j})^{-1} (E_{\bf i}-E_{\bf j})^{-1} 
 V_{\bf j,k}V_{\bf k,l}V_{\bf l,i}V_{\bf i,j}=}\\
 &\qquad(-\omega-s({\bf j})\Omega)^{-1} (-2\omega)^{-1}(-\omega-s({\bf j})\Omega)^{-1}\left(\frac{n({\bf j})}2\right)\left( \frac{n({\bf j})-1}2\right)\\
 &\qquad+(\omega-s({\bf j})\Omega)^{-1}(2\omega)^{-1}(\omega-s({\bf j})\Omega)^{-1} \left(\frac{n({\bf j})+1}2\right)\left( \frac{n({\bf j})+2}2\right).
\end{align*}

Notice that the formula is correct also in the case where $n({\bf j})=0$ or $n({\bf j})=1$.

Let us now compute the third term {in \eqref{E_4}.}
As before, to avoid null terms we assume $s({\bf j}) = -s({\bf m}) = - s({\bf p})$. The nonzero terms in the sum are given by $n({\bf m})=n({\bf j})\pm 1$ and  $n({\bf p})=n({\bf j})\pm 1$ thus we have to sum four terms.
We have
\begin{align*}
\sum_{{\bf m} \neq {\bf j},{\bf p} \neq {\bf j}} (E_{\bf m}-E_{\bf j})^{-2}(E_{\bf p}-E_{\bf j})^{-1} V_{\bf j,m}V_{\bf m,j}V_{\bf j,p}V_{\bf p,j}&=(-\omega-s({\bf j})\Omega)^{-2}(-\omega-s({\bf j})\Omega)^{-1}\left(\frac{n({\bf j})}2\right)^2\\
&+(-\omega-s({\bf j})\Omega)^{-2}(\omega-s({\bf j})\Omega)^{-1}\,\frac{n({\bf j})}2\, \frac{n({\bf j})+1}2\\
&+(\omega-s({\bf j})\Omega)^{-2}(-\omega-s({\bf j})\Omega)^{-1}\,\frac{n({\bf j})}2 \,\frac{n({\bf j})+1}2\\
&+(\omega-s({\bf j})\Omega)^{-2}(\omega-s({\bf j})\Omega)^{-1}\left(\frac{n({\bf j})+1}2\right)^2.
\end{align*}

By summing up all the terms one sees that, for fixed $s=s({\bf j})$, the term $E_{\bf j}^{(4)}$ depends quadratically on $n({\bf j})$, i.e. 
$$E_{\bf j}^{(4)} = C_0(s({\bf j}))+C_1(s({\bf j})) n({\bf j}) + C_2(s({\bf j})) n({\bf j})^2,$$
 where the coefficient $C_2(s({\bf j}))$ is given by
$$C_2(s({\bf j}))= s({\bf j}) \frac{\Omega(\omega^2+3 \Omega^2)}{2 (\omega^2- \Omega^2 )^3} \neq 0.
$$

So, {if ${\mathbf{i}, \mathbf{j}, \mathbf{k}, \mathbf{l}}$} are such that $s({\bf i})=s({\bf j})=s({\bf k})=s({\bf l})=s$ and $n({\bf i})-n({\bf j}) = n({\bf k})-n({\bf l})$, we have
\begin{align*}
 E_{\bf i}^{(4)}- E_{\bf j}^{(4)}=E_{\bf k}^{(4)}- E_{\bf l}^{(4)} 
\iff & C_1(s) (n({\bf i})- n({\bf j})) + C_2(s) (n({\bf i})^2 -n({\bf j})^2) = \\
&\quad C_1(s) (n({\bf k})- n({\bf l})) + C_2(s) (n({\bf k})^2 -n({\bf l})^2) \\
\iff &  C_2(s) (n({\bf i})^2 -n({\bf j})^2) =  C_2(s) (n({\bf k})^2 -n({\bf l})^2) \\
\iff &  C_2(s) (n({\bf i}) +n({\bf j})) =  C_2(s) (n({\bf k}) + n({\bf l}) )\\
\iff &  n({\bf i}) =  n({\bf k}) \mbox{ and } n({\bf j}) =  n({\bf l}).
 \end{align*}

This concludes the proof that
for almost every $g\in \R$ and every ${\bf i,j,k,l}\in \N\times\{-1,1\}$, with $({\bf i,j})\neq ({\bf k,l})$ and ${\bf i\neq j}$, 
one has $E_{\bf i}^g-E_{\bf j}^g\neq E_{\bf k}^g-E_{\bf l}^g$.


\subsubsection{Step 2: Coupling of the relevant energy levels.}
\label{Sec_coupling}

The proof of Theorem~\ref{Tmain} is then concluded, thanks to Theorem~\ref{pro_approx_contr_norme_H},
if we show that  the controlled Hamiltonian $x\otimes \Id$ couples, directly or indirectly,  all the energy levels for almost all $g\in \R$.

More precisely, we show below that $\langle \Phi_{\bf j}^g,(x\otimes \Id)\Phi_{\bf k}^g\rangle \neq 0$ for almost every $g\in\R$ for all ${\bf j,k}$ such that $s({\bf j})=s({\bf k})$ and $| n({\bf j}) - n({\bf k}) | =1$ or $s({\bf j})=-s({\bf k})$ and $n({\bf j}) = n({\bf k})$.  See Figure \ref{e-vals}. As before, it is enough to show that the corresponding Taylor series in $g$ is nonzero.

\begin{figure}
\begin{center}
\includegraphics[width=3.7cm]{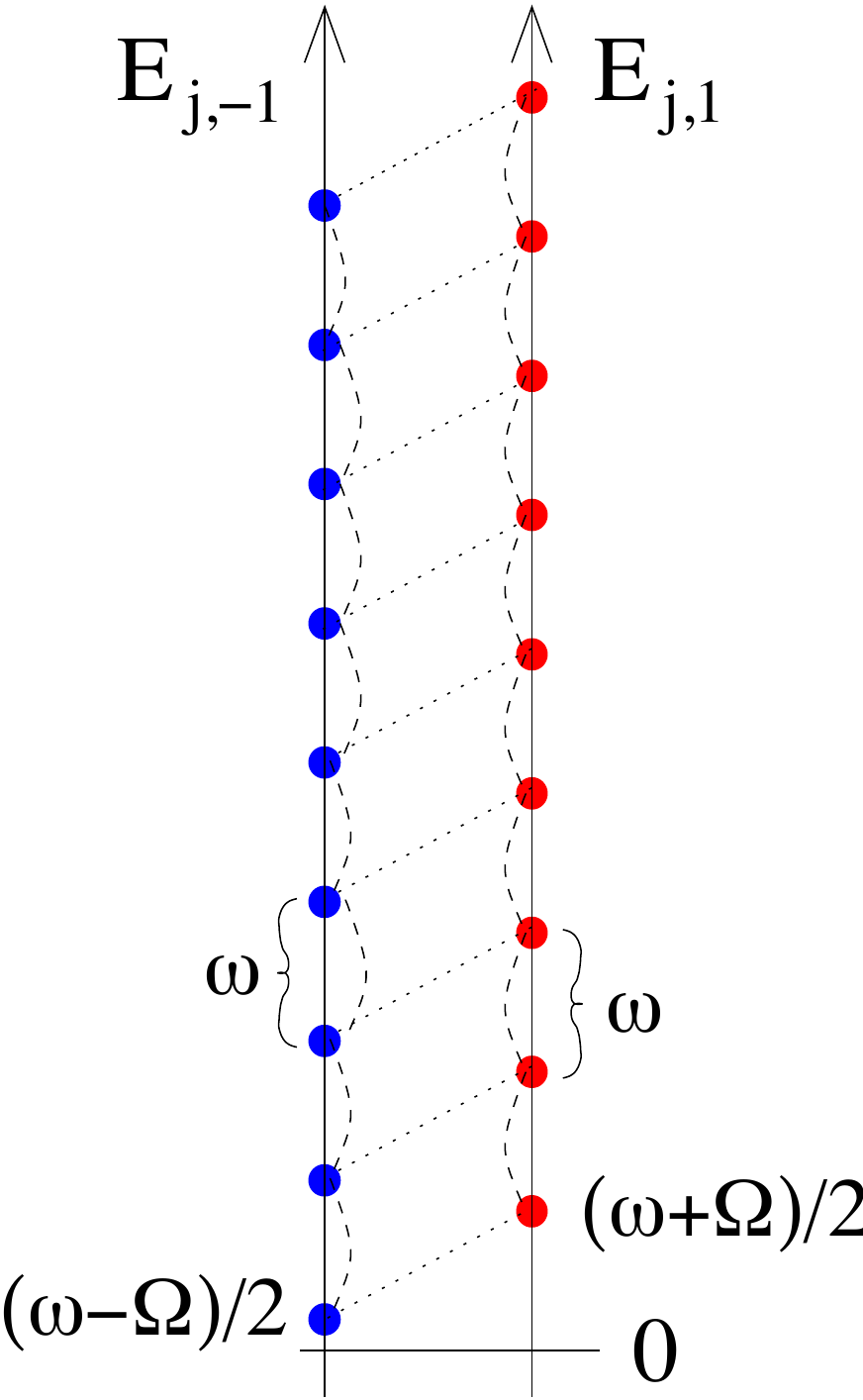}
\caption{\label{e-vals} The dashed lines connect eigenvalues of $H_{\Rabi}$ that are
``coupled'' by the controlled Hamiltonian when $g=0$, while the dotted
lines connect eigenvalues that are coupled by the controlled
Hamiltonian for almost all $g\neq 0$.} 
\end{center}
\end{figure}

Set $\Phi_{\bf j}^g=\Phi_{\bf j}+\sum_{m=1}^{\infty} g^m \Phi_{\bf j}^{(m)}$.
We have 
\begin{align*}
\langle \Phi_{\bf j},(x\otimes \Id)\Phi_{\bf k}\rangle =\left(\delta_{n({\bf j}),n({\bf k})-1}\sqrt{\frac{n({\bf j})+1}2}+\delta_{n({\bf j}),n({\bf k})+1}\sqrt{\frac{n({\bf j})}2}\right) \delta_{s({\bf j}),s({\bf k})}.
\end{align*}
This is enough to say that $\langle \Phi_{\bf j}^g,(x\otimes \Id)\Phi_{\bf k}^g\rangle \neq 0$ for almost every $g$ for all ${\bf j,k}$ such that $s({\bf j})=s({\bf k})$ and $| n({\bf j}) - n({\bf k}) | =1$.

The term $\Phi_{\bf j}^{(1)}$ can be characterized through the relation
\begin{align}
(H_{\Rabi,0}+g x\otimes \sigma_1)(\Phi_{\bf j}+g \Phi_{\bf j}^{(1)}+o(g))=(E_{\bf j}+g E_{\bf j}^{(1)}+o(g))(\Phi_{\bf j}+g \Phi_{\bf j}^{(1)}+o(g)).\label{frod}
\end{align}
Regrouping the first-order terms in \eqref{frod} we get
\begin{equation}\label{o-1}
H_{\Rabi,0} \Phi_{\bf j}^{(1)} +(x\otimes \sigma_1)\,\Phi_{\bf j}-
E_{\bf j}\Phi_{\bf j}^{(1)} -E_{\bf j}^{(1)}\Phi_{\bf j}=0.
\end{equation}
Denote by $\Pi$ the orthogonal projection on the orthogonal complement to $\Phi_{\bf j}$.
Applying $\Pi$ to \eqref{o-1}, we get
$$(H_{\Rabi,0}-E_{\bf j}\Id)\Phi_{\bf j}^{(1)}+ \Pi(x\otimes \sigma_1
)\Phi_{\bf j} = 0.$$
Notice that
the orthogonal complement to $\Phi_{\bf j}$ is an invariant space for the operator
$H_{\Rabi,0}-E_{\bf j}\Id$, which is invertible when restricted to it. We write
$(H_{\Rabi,0}-E_{\bf j}\Id)^{-1}$ to denote its inverse (whose values are in the orthogonal complement
to $\Phi_{\bf j}$).
Thus,
\begin{align*}
\Phi_{\bf j}^{(1)}&= -(H_{\Rabi,0}-E_{\bf j}\Id)^{-1}
 \Pi(x\otimes \sigma_1
 )\Phi_{\bf j}= \sum_{\bf l\neq j} (E_{\bf j}-E_{\bf l})^{-1} \langle  \Phi_{\bf l}, (x\otimes \sigma_1) \Phi_{\bf j}\rangle \Phi_{\bf l}.
\end{align*}

The linear term in the Taylor expansion of $\langle \Phi_{\bf j}^g,(x\otimes \Id)\Phi_{\bf k}^g\rangle$ with respect to $g$ is given by
\begin{align*}
\langle \Phi_{\bf j},(x\otimes \Id)\Phi_{\bf k}^{(1)}\rangle+\langle \Phi_{\bf j}^{(1)},(x\otimes \Id)\Phi_{\bf k}\rangle=&\sum_{\bf l\neq k} (E_{\bf k}-E_{\bf l})^{-1} \langle  \Phi_{\bf l}, (x\otimes \sigma_1) \Phi_{\bf k}\rangle \langle \Phi_{\bf j}, (x\otimes \Id)\Phi_{\bf l}\rangle+\\
&\sum_{\bf l\neq j} (E_{\bf j}-E_{\bf l})^{-1} \langle  \Phi_{\bf l}, (x\otimes \sigma_1) \Phi_{\bf j}\rangle \langle \Phi_{\bf l}, (x\otimes \Id)\Phi_{\bf k}\rangle.
\end{align*}

Taking into account only the nonzero terms in the first sum gives $s({\bf j}) = s({\bf l}) = -s({\bf k})$, $|n({\bf j})-n({\bf l})|=1$ and $|n({\bf k})-n({\bf l})|=1$, so for fixed ${\bf j,k}$ we only have (at most) two terms. We assume $n({\bf j})=n({\bf k})$, thus we have
\begin{align*}
\langle \Phi_{\bf j},(x\otimes \Id)\Phi_{\bf k}^{(1)}\rangle=&\left[(-\omega+s({\bf k})\Omega)^{-1}\frac{n({\bf k})+1}2 + (\omega+s({\bf k})\Omega)^{-1}\frac{n({\bf k})}2  \right]\\
=&\frac{\omega+s({\bf k})\Omega (1+2  n({\bf k}))}{2(\Omega^2-\omega^2)}.
\end{align*}
Similarly,
\begin{align*}
{\langle \Phi_{\bf j}^{(1)},(x\otimes \Id)\Phi_{\bf k}\rangle=}
&\frac{\omega-s({\bf k})\Omega (1+2  n({\bf k}))}{2(\Omega^2-\omega^2)}.
\end{align*}

Hence,
\begin{align*}
\langle \Phi_{\bf j}^{(1)},(x\otimes \Id)\Phi_{\bf k}\rangle+{\langle \Phi_{\bf j},(x\otimes \Id)\Phi_{\bf k}^{(1)}\rangle}
&=\frac{\omega}{\Omega^2-\omega^2},
\end{align*}
which is different from zero for every {\bf k}. Thus the Taylor series of $\langle \Phi_{\bf j}^g,(x\otimes \Id)\Phi_{\bf k}^g\rangle $ is nonzero, which concludes the proof of the theorem in the case $\omega\ne\Omega$.

\subsection{Case $\omega=\Omega$}\label{case=}

Let us assume in this section that $\omega=\Omega$. Hence 
$$E_{j,1}=E_{j+1,-1}=\omega(j+1)$$ 
for every $j\in\N$ and $E_{0,-1}=0$.

Recall that, as $g$ varies in $\R$,  a complete set of eigenpairs for $H_\Rabi$ can be given in the form
$(E_{\bf j}^g,\Phi_{\bf j}^g)_{{\bf j}\in\N\times\{-1,1\}}$
with each $E_{\bf j}^g$ and $\Phi_{\bf j}^g$ depending analytically on $g$ and $E_{\bf j}^0=E_{\bf j}$. 
Notice that each $\Phi_{\bf j}^g$ can be taken in $L^2(\R,\R)\otimes \C^2$ and that we can take 
$\Phi_{0,-1}^0=\Phi_{0,-1}=\varphi_0\otimes \nu_{-1}$. 

The eigenfunctions  $\Phi_{j,1}^0,\Phi_{j+1,-1}^0$ are characterized by the relation
$$ \langle \Phi_{j,1}^0,(X\otimes \sigma_1)\Phi_{j+1,-1}^0\rangle=0.$$
(This follows by standard perturbation relations
recalled in the Appendix, 
 see equation \eqref{splitting}.)

Since, moreover, $\Phi_{j,1}^0$ and $\Phi_{j+1,-1}^0$ are orthogonal and of norm one, one gets, up to signs,
$$ \{\Phi_{j,1}^0,\Phi_{j+1,-1}^0\}=\left\{\frac{\Phi_{j,1}+\Phi_{j+1,-1}}{\sqrt{2}},\frac{\Phi_{j,1}-\Phi_{j+1,-1}}{\sqrt{2}}\right\}.$$

Let us fix, by convention, 
$$ \Phi_{j,1}^0=\frac{\Phi_{j,1}+\Phi_{j+1,-1}}{\sqrt{2}},\qquad \Phi_{j+1,-1}^0=\frac{\Phi_{j,1}-\Phi_{j+1,-1}}{\sqrt{2}}.$$

With this choice \eqref{splitting} yields
$$\frac{d}{dg}E_{j,1}^g\big|_{g=0}=\sqrt{\frac{j+1}{2}},\qquad \frac{d}{dg}E_{j+1,-1}^g\big|_{g=0}=-\sqrt{\frac{j+1}{2}}.$$

Assume now that
\begin{align}
E_{{\bf j}_1}^0-E_{{\bf j}_2}^0&=E_{{\bf j}_3}^0-E_{{\bf j}_4}^0,\label{o1}\\
\frac{d}{dg}\big|_{g=0} (E_{{\bf j}_1}^g-E_{{\bf j}_2}^g)&=\frac{d}{dg}\big|_{g=0}(E_{{\bf j}_3}^g-E_{{\bf j}_4}^g),\label{o2}
\end{align}
and let us prove that ${\bf j}_1={\bf j}_3$ and ${\bf j}_2={\bf j}_4$.
Write $E_{{\bf j}_l}^0=\omega k_l $ for $l=1,\dots,4$ for some $k_1,\dots,k_4\in \N$. 
Then \eqref{o1} implies that $k_1=k_2+p$ and $k_3=k_4+p$ for some $p\in \Z$.

Assume that $k_1,\dots,k_4\ne 0$ (the other cases being similar). 
Equality \eqref{o2} then yields
\begin{equation}\label{s14}
 s_1 \sqrt{k_2+p}-s_2 \sqrt{k_2}=s_3 \sqrt{k_4+p}-s_4 \sqrt{k_4}
 \end{equation}
for some $s_1,\dots,s_4\in\{-1,1\}$
and we are left to prove that $s_1=s_3$, $s_2=s_4$ and $k_2=k_4$. Without loss of generality we can assume that $k_2+p=\max\{k_2,k_2+p,k_4,k_4+p\}$ and that $s_1=1$. Then $p$ and 
the left-hand side in \eqref{s14} are nonnegative, which implies that $s_3=1$. 
We can then rewrite \eqref{s14} as
\begin{equation}\label{s14bis}
\sqrt{k_2+p}-s_2 \sqrt{k_2}=\sqrt{k_4+p}-s_4 \sqrt{k_4}
 \end{equation}
 If $s_2=s_4$ then, by monotonicity of $k\mapsto \sqrt{k+p}\pm \sqrt{k}$ we deduce that $k_2=k_4$ and we are done. 
Finally, if $s_2=-s_4$ then either $s_2=-1$, and then $k_4> k_2$, or $s_2=1$ and then $\sqrt{k_2+p}=\sqrt{k_4+p}+ \sqrt{k_4}+ \sqrt{k_2}>\sqrt{p}+\sqrt{k_2}\geq \sqrt{k_2+p}$, leading in both cases to a contradiction.

In order to apply Theorem~\ref{pro_approx_contr_norme_H} we are left to prove that 
 the controlled Hamiltonian $X\otimes \Id$ couples, directly or indirectly, 
 all the elements of the basis $(\Phi_{\bf j}^0)_{{\bf j}\in \N\times\{-1,1\}}$. 
 
First notice that
$$\langle \Phi_{j,1}^0,(X\otimes \Id)\Phi_{j+1,1}^0\rangle=\frac12\left(\sqrt{\frac{j+1}2}+\sqrt{\frac{j+2}2}\right)\ne 0,\qquad j\in \N,$$
meaning that  $X\otimes \Id$ couples all basis elements of the type 
$\Phi_{j,1}^0$. 

Similarly, one has that $\langle \Phi_{j,-1}^0,(X\otimes \Id)\Phi_{j+1,-1}^0\rangle\ne 0$ for every $j\in \N$. 
Finally, one easily checks that 
$\langle \Phi_{j,1}^0,(X\otimes \Id)\Phi_{j,-1}^0\rangle\ne 0$ for every $j\in \N$, completing the proof.

\section*{Appendix: global real analyticity of eigenpairs}

In this section we detail the proof of the analyticity of the eigenpairs $(E_{\bf j}^{g},\Phi_{\bf j}^{g})$ as functions of $g\in\R$. 
This result is somehow {\it folklore} and follows from well-known theorems in linear perturbation theory (\cite{Kato,reed_simon}). 
Since the latter are spread over the literature, we prefer to sketch here the main ideas of the argument, which is also useful in other contexts, as e.g. for the degenerate Gell-Man and Low theorem (\cite{Brouder2009a,Brouder2009,Brouder2008}).

Recall that, 
given two self-adjoint operators $H_0,W$  on a separable Hilbert space,
 $W$ is said to be {\it Kato-small with respect to $H_0$} if $D(H_0)\subset D(W)$ and for every $a>0$ there exists $b>0$ such that 
$\|W\psi\|\leq a \|H_0 \psi\| + b\|\psi\|$ for every $\psi\in D(H_0)$. 
According to this definition, it is easy to check that  $V$, defined as in \eqref{RabiV}, is Kato-small with respect to $H_{\Rabi,0}$.

\begin{prop}
Let $H_0,W$ be self-adjoint operators on a separable Hilbert space $\mathcal{H}$. 
Assume that $W$ is Kato-small with respect to $H_0$ and that $H_0$ has compact resolvent. 
Then 
$H_g=H_0+ g W$ 
is self-adjoint with compact resolvent for every $g\in \R$, hence it admits a complete orthonormal system of eigenfunctions $\{\psi_j(g)\}_{j\in\N}$ and the corresponding eigenvalues $\{E_j(g)\}_{j\in\N}$ have finite multiplicity and do not accumulate at any finite point. 
Moreover, up to a suitable choice of labelling, for each $j\in\N$ the function $g\mapsto  (E_j(g),\psi_j(g))$ is analytic from $\R$ to $\R\times \mathcal{H}$. 
\end{prop}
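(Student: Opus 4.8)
The plan is to combine Kato's analytic perturbation theory for type (A) families with a global argument that rules out collisions of eigenvalue branches for all real $g$ (not just locally). First I would check the self-adjointness and compactness claims: since $W$ is Kato-small with respect to $H_0$, the Kato--Rellich theorem shows that $H_g = H_0 + gW$ is self-adjoint on $D(H_0)$ for every $g \in \R$, and since $D(H_g) = D(H_0)$ with the graph norms uniformly equivalent on compact $g$-intervals, the compactness of the resolvent of $H_0$ transfers to $H_g$ (e.g.\ because $(H_g - \iu)^{-1} - (H_0 - \iu)^{-1}$ is compact, or directly via the uniform bound $\|(H_0+\iu)(H_g+\iu\mu)^{-1}\| < \infty$ for suitable $\mu$). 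Hence each $H_g$ has a purely discrete spectrum with finite multiplicities accumulating only at infinity, and a complete orthonormal eigenbasis.

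Next I would invoke the core analyticity statement. The family $g \mapsto H_g$ is a self-adjoint holomorphic family of type (A) on $\R$ (indeed on a complex neighborhood of $\R$, using Kato-smallness to control $gW$ on $D(H_0)$ uniformly on compacts), so by Kato's theorem (Reed--Simon IV, Theorem XII.8, or Kato's book, Chapter VII, Theorem 3.9) the eigenvalues and eigenprojections can be chosen to be analytic functions of $g$ in a neighborhood of each point of $\R$; where eigenvalues are simple the eigenvectors themselves are analytic, and at a crossing the branches still extend analytically through the crossing (possibly exchanging labels). The content to be established is that these local analytic branches glue into globally analytic functions on all of $\R$ with the stated initial condition at $g=0$.

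The main obstacle — and the step deserving real care — is the global labelling. Locally near $g=0$ one orders the eigenvalues and picks analytic branches $E_j(g)$, $\psi_j(g)$; the issue is continuation along all of $\R$ without the branches merging permanently or failing to close up. The key fact is that two distinct analytic eigenvalue branches of a type (A) family can coincide only on a discrete set of $g$ (if $E_j(g) = E_k(g)$ on a set with an accumulation point, analyticity forces them equal identically, so they are the same branch); moreover at any real crossing point the full collection of eigenvalues near the crossing value is still given by finitely many analytic functions (the Rellich theorem on a self-adjoint family guarantees the $\lambda$-group splits into analytic branches with no ramification, since there is no exceptional set for self-adjoint families on $\R$). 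Therefore on $\R$ one obtains a countable family of everywhere-analytic functions $\{E_j(g)\}$, and correspondingly analytic eigenprojections and (after the usual normalization fixing, e.g.\ solving $\dot{\psi}_j = \dot{P}_j \psi_j$, $\|\psi_j\|=1$, or Kato's transformation function) analytic eigenvectors $\psi_j(g)$. Re-indexing once so that $E_j(0) = E_j^0$ and, when $E_j^0$ is simple, $\psi_j(0) = \psi_j^0$, gives the claim; completeness of $\{\psi_j(g)\}$ for each fixed $g$ follows since the eigenprojections sum to the identity, this being an open and closed condition in $g$, hence true on all of $\R$.

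I would also remark that, beyond the abstract statement, the explicit first-order data used in the body of the paper (the Rayleigh--Schrödinger coefficients $E^{(m)}_{\bf j}$ and the splitting formula \eqref{splitting} for the degenerate case $\omega=\Omega$) are exactly the standard outputs of this analytic perturbation scheme: $\frac{d}{dg}E_j(g)\big|_{g=0} = \langle \psi_j^0, W \psi_j^0\rangle$ in the simple case, and in the degenerate case the derivatives are the eigenvalues of the restriction of $W$ to the unperturbed eigenspace, which is precisely what fixes the privileged zeroth-order eigenbasis $\Phi_{\bf j}^0$ via $\langle \Phi_{j,1}^0, (X\otimes\sigma_1)\Phi_{j+1,-1}^0\rangle = 0$. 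So no separate argument is needed for those; they are corollaries of the proposition together with the explicit form of $V = X\otimes\sigma_1$.
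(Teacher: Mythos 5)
Your first two steps (Kato--Rellich self-adjointness and the transfer of resolvent compactness via the second resolvent identity and the boundedness of $W(H_g-\iu\Id)^{-1}$) match the paper, and invoking local analytic perturbation theory (Rellich/Kato, including the no-ramification statement at real crossings of a self-adjoint family) is also the paper's route. The gap is in the globalization. Your gluing argument only addresses one failure mode --- branches merging at crossings --- but not the other one: a locally analytic branch $g\mapsto E_j(g)$, continued from $g=0$, is a priori defined only on some maximal interval $I\subsetneq\R$, and nothing in your argument shows that $E_j(g)$ has a finite limit as $g$ approaches an endpoint of $I$ (it could blow up, or oscillate), which is exactly what is needed to apply the local Rellich theorem at the endpoint and continue the branch. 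This is where the paper spends most of its proof: using Kato-smallness with relative bound $a=(2\ell)^{-1}$ adapted to the interval $[-\ell,\ell]$, it derives $\|W\psi\|\le 2a\|H_g\psi\|+2b\|\psi\|$, hence via \eqref{splitting} the a priori estimate $\bigl|\frac{d}{dg}E_j(g)\bigr|=|\langle\psi_j(g),W\psi_j(g)\rangle|\le 2a|E_j(g)|+2b$, and then Gronwall to conclude that $E_j$ is Lipschitz on $I\cap[-\ell,\ell]$, so its boundary limits exist, are eigenvalues of the limiting operator (Reed--Simon, Theorem XII.7), and the branch extends to all of $[-\ell,\ell]$, hence to $\R$. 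Your proposal contains no substitute for this quantitative step; ``two analytic branches coincide only on a discrete set'' and ``no exceptional points for self-adjoint families'' do not by themselves prevent escape of an eigenvalue branch in finite $g$.

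A way to repair it within your own framework: either reproduce the derivative-plus-Gronwall bound above (this is precisely where the hypothesis that $W$ is Kato-small, i.e.\ has relative bound zero, earns its keep), or use a min-max/continuity argument for the ordered eigenvalues to rule out finite-$g$ blow-up, or cite Kato, Chapter VII, Theorem 3.9 \emph{as the global statement it actually is} (analytic eigenvalue and eigenvector families on the whole real interval for a self-adjoint type (A) family with compact resolvent), in which case the proposition follows essentially by quotation and most of your hand-made gluing becomes unnecessary. As written, however, you cite that theorem only for analyticity ``in a neighborhood of each point of $\R$'' and then take on the global continuation yourself, so the burden of the missing estimate is on you. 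The remaining remarks (completeness of the eigenbasis for each fixed $g$, and the splitting relation \eqref{splitting} fixing the privileged basis in the degenerate case $\omega=\Omega$) are fine and consistent with the paper.
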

\begin{proof}
 Self-adjointness of $H_g$ follows from \cite[Theorem X.12]{reed-simon-2}.
Let us first prove that  $H_g$  has compact resolvent for every $g\in\R$ \gp{(we provide a direct proof; see also 
\cite[Theorem VII.2.4]{Kato})}. 
Since\footnote{Notice that $(H_g-\iu \Id )^{-1}$ maps $\Hi$ into $D(H_g) \subset D(W)$, so that 
$W (H_g-\iu \Id )^{-1}$ is well-defined on the whole Hilbert space $\Hi$.}
$$
(H_g-\iu \Id)^{-1}=(H_0-\iu \Id)^{-1} \left(\Id - g  \, W (H_g-\iu \Id)^{-1} \right)
$$
and $(H_0- \iu \Id)^{-1}$ is compact, we are left to prove that $W(H_g- \iu \Id)^{-1}$ is bounded.
Notice that for every $a>0$ there exists $b>0$ such that
$\|W\psi\|\leq a \|(H_g-\iu \Id) \psi \| + b\|\psi\|$ for every $\psi\in D(H_0)=D(H_g-\iu \Id )$. 
Setting $ \psi=(H_g-\iu \Id)^{-1}\phi$ for $\phi \in \Hi$, we get
$$\|W(H_g- \iu \Id)^{-1} \phi \|\leq a \|\phi \| + b\|(H_g-\iu \Id)^{-1} \phi \|\leq (a+b)\|\phi\|,$$ 
where the last inequality follows from $\|(H_g-\iu \Id)\eta \|\|\eta \|\geq |\langle (H_g-\iu \Id ) \eta,\eta \rangle|\geq \|\eta \|^2$, 
for $\eta \in D(H_g - \iu \Id)$.

If $E_{j}(g)$ is simple then,
by Rellich's theorem \cite[Theorem XII.8]{reed_simon}, 
 up to a reordering of the spectrum of $H_{g'}$ for $g'$ in a neighborhood of $g$, 
the eigenpair parameterization $g'\mapsto (E_{j}(g'),\psi_{j}(g'))$ is analytic near $g$.  
The previous result can be generalized to the case of eigenvalues of multiplicity $m\geq 2$. 
Indeed, if $E_{j_1}(g)=\dots=E_{j_m}(g)$ and ${j}_1,\dots,{j}_m$ are distinct, 
we can assume, up to relabelling, that $g'\mapsto (E_{{j}_1}(g'),\psi_{{j}_1}(g')),\dots,g'\mapsto (E_{{j}_m}(g'),\psi_{{j}_m}(g'))$ are analytic near $g$
(\cite[Theorem XII.13]{reed_simon}). Moreover, 
\begin{equation}\label{splitting}
\langle \psi_{{ j}_l}(g), W\psi_{{j}_k}(g)\rangle=\delta_{lk}\frac {d}{d g}E_{{j}_l}(g),\quad l,k=1,\dots,m.
\end{equation}
(See, for instance, \cite{albert}.)

In order to describe the spectrum by a countable family of functions which are globally analytic on  $\R$,
we should ensure that all such locally analytic functions $E_j(\cdot)$ can be extended indefinitely. 

Fix $j$ and let $I$ be the maximal interval containing $0$ such that, up to relabelling, $E_j$ is analytic on $I$. Let us prove that, for every  $\ell \in (0,+\infty)$, $ [-\ell,\ell]\subset I$.

Set $a=(2\ell)^{-1}$. Take $b>0$ such that 
 $\|W\psi\|\leq a \|H_0 \psi\| + b\|\psi\|$ for every $\psi\in D(H_0)$. 
 Hence, for $\psi\in D(H_0)$ and $g\in [-\ell,\ell]$,
\begin{align*}
 \|W\psi\|&\le a \|(H_0+gW)\psi\|+a|g|\|W\psi\|+b \|\psi\|\\
&\le  a \|H_g\psi\|+\frac{\|W\psi\|}2+b \|\psi\|
 \end{align*}
 and therefore
\begin{align*}
 \|W\psi\|&\le 2 a \|H_g\psi\|+2b \|\psi\|,\quad g\in[-\ell,\ell].
 \end{align*}
 
It follows from \eqref{splitting} that, for every $g\in I\cap [-\ell,\ell]$, 
 $$\left|\frac{d}{dg}E_j(g)\right|=|\langle \psi_j(g),W\psi_j(g)\rangle|\leq \|W \psi_j(g)\|\leq 2a  \|H_g\psi_j(g)\|+2b \|\psi_j(g)\|.
$$
Hence,
 $$\left|\frac{d}{dg}|E_j(g)|\right|\leq 2a  |E_j(g)|+2b,\qquad \mbox{ for almost every $g\in I\cap [-\ell,\ell]$.}
$$

A standard application of Gronwall inequality yields that $E_j(\cdot)$ is Lipschitz on $I\cap [-\ell,\ell]$.  In particular the limits of $E_j(\cdot)$ at the boundary of $I\cap[-\ell,\ell]$ are well defined and they are eigenvalues of the corresponding perturbed operators (see e.g.~\cite[Theorem XII.7]{reed_simon}). Therefore $E_j(\cdot)$ can be extended indefinitely on $[-\ell,\ell]$ and thus also on the whole real line.  
\end{proof}

\section*{Acknowledgements}
We are grateful to L. Pinna for a careful reading of the manuscript and some useful comments, and to the {\it  Institut Henri Poincar\'e} for the kind hospitality in the framework of the trimester \virg{Variational and Spectral Methods in Quantum Mechanics}, organized by M.\,J.\ Esteban and M.\ Lewin. 

\medskip

\noindent This research has been supported by the European Research Council, ERC StG 2009 ``GeCoMethods'', contract number 239748,  by the iCODE institute, research project of the Idex Paris-Saclay and by the EU FP7 project QUAINT, grant agreement no. 297861.

{
\bibliographystyle{abbrv}
\bibliography{biblioRABI}
}

\end{document}